\newtheorem{Thm}{Theorem}[section]
\newtheorem{theorem}[Thm]{Theorem}
\newtheorem{proposition}[Thm]{Proposition}
\newtheorem{corollary}[Thm]{Corollary}
\newtheorem{remark}{Remark}[section]
\title{One or two small points in thermodynamics}
\author{Walter F. Wreszinski\footnote{wreszins@gmail.com, 
Instituto de Fisica, Universidade de S\~ao Paulo (USP), Brazil}}        
\begin{document}

\maketitle

\begin{abstract}
I present my recollections of what I used to find to be ``one or two small points in thermodynamics'', following Sommerfeld's famous quote, and review them on the light of present knowledge. 
\end{abstract}

\section{Introduction and motivation}

During my student days, I enjoyed the famous quote from Arnold Sommerfeld \cite{Somm}: `` Thermodynamics is a funny subject. The first time you go through it, you don't understand it at all. The second time you go through it, you think you understand it, except for one or two small points. The third time you go through it, you know you don't understand it, but by that time you are so used to it, that it doesn't bother you any more''.

In fact, almost every student has found his ``one or two small points''. This may be due to the fact that, in spite of being ``the unique universal science'' \cite{Had}, thermodynamics is populated with various logical and mathematical contradictions, to the point that - in another famous quote - the russian mathematician Vladimir I. Arnold was led to assert that ``every mathematician knows it is impossible to understand an elementary course in thermodynamics'' \cite{Arn}.

Having written one such course (textbook) myself, I realized that I succeeded only partially in trying to contradict Arnold's quote, because, even after the completion of the second edition \cite{WreEd}, at least ``two small points'' did remain. The present note is my account of them. The whole discussion (restricted to the first and second laws, for the third, see \cite{WreA}) is elementary, i.e., at the level of an elementary course in thermodynamics. Classical references are \cite{Cal}, \cite{tHW}. A recent textbook, with a large palette of interesting applications, including chemistry, is \cite{Oliveira1}. Although applications are very important, my primary aim here is to be as precise in the foundations as possible, remaining in the textbook level. 

In a slightly more general context, one might inquire: why should one try to arrive at a more precise understanding of the laws of nature? One reason is that, even regarding some apparently rather commonplace phenomena, such as darkness at night, the various \emph{stages of understanding} may lead to surprisingly new insights: who would have suspected that, in a deeper, cosmological, sense, this phenomenon is related to the death of stars (\cite{Har}, Chap. 12, p. 249)? In addition, and intimately related to this issue, there is Lesson III of Arthur Wightman in \cite{Jaffe}: Distinguish ``what you know'' from what you ``think you know''. In this same lesson III, it is also reported that ``Arthur insisted: A great physical theory is not mature until it has been put in a precise mathematical form''. Concerning thermodynamics, this endeavor has been achieved by E. Lieb and J. Yngvason in a remarkable analysis \cite{LYPR}, to which I shall have occasion to refer in the forthcoming sections. Two pedagogical accounts of their work are \cite{LYNAMS} and \cite{LYPT}.

Section 2 discusses the first small point, and section 3 a possible solution, as well as the reasons why this solution remains physically objectionable. Section 4 expands on the point left in section 3 and suggests that the interplay with other areas of physics might be necessary for a better understanding. In
section 5, I discuss the second small point, leaving the very brief conclusion to section 6.

\section{A first small point: Clausius' formulation of the second law}

In my student time, Dyson's article ``What is heat?'' \cite{Dyson2} was very popular. It began with the sentence ``Heat is disordered energy''. indeed, heat, denoted by $Q$, provides the balance of energy in the \emph{first law of thermodynamics}, viz.

\begin{equation}
\tilde{d}Q= dU + \tilde{d}W
\label{(1)}
\end{equation}

Above, $\tilde{d}$ denote ``inexact differentials'' and, in \eqref{(1)}, for a given system, $U$ denotes its (internal) energy, and $W$ the amount of work done by the outside world on it. An \emph{equilibrium state} of a \emph{simple system} will be taken to be characterized by a point $X \equiv (U,V,N)$, with $U$ the internal energy, $V$ a work coordinate, e.g., the volume, and $N$ the particle number. The question of the choice of variables in thermodynamics is important: see \cite{ASWight}, see also \cite{Cout} for a clear exposition.

Reference \cite{Dyson2} discusses the example of a bicycle pump: ``Before compression the air atoms are already moving at random in all directions; in other words, this is a disordered system, and its energy is in the form of heat, though we do not feel it, because the air is only at room temperature. Now, if you pump vigorously, compressing the air rapidly, it heats up; the pump becomes hot to the touch. The air has the same disorder it had before, but more energy. By doing work, you have pushed more energy into the air, and the observed production of heat is just the effect of this addition of energy to the pre-existing disorder.''

The previous description seems to confirm the remarks by Lieb and Yngvason (\cite{LYPR}, p.7): ``no one has ever seen heat, nor will it ever be seen, smelled or touched''. Indeed, before pumping, the energy was \emph{also} in the form of heat, ``although we do not feel it''. The \emph{only} aspect distinguishing heat from other forms of energy is the dependence on the way how the processes in which it enters are carried out, which is directly inherited from \eqref{(1)}, in case $\tilde{d}W \ne 0$.

Continuing with the textbook conventions, a \emph{process} is a transformation $C_{12} = X_{1} \to X_{2}$ from a \emph{state} $X_{1} \equiv (U_{1},V_{1},N_{1})$ to a state $X_{2} = (U_{2},V_{2},N_{2})$. I shall be dealing with \emph{equilibrium}, by which it is meant that a process consists of an infinite number of ``infinitesimal'' transformations, i.e., the limiting idealized case in which each intermediate stage deviates only ``infinitesimally'' from equilibrium. This procedure has been avoided in (\cite{LYPR}, p.17) by their definition of adiabatic accessibility. There are processes $X_{1} \to X_{2}$ for which the amount of work done by the outside world in going from $X_{1} \to X_{2}$ is independent of the manner in which this transition is carried out: they are called \emph{adiabatic}. Otherwise, the first law \eqref{(1)} holds instead, with $\tilde{d}Q \ne 0$ and $\tilde{d}W \ne 0$. If, for any process on a given system, $\tilde{Q}=dU$ in \eqref{(1)}, we say that we have a \emph{heat or energy reservoir}, when, instead, $\tilde{d}Q= \tilde{d}W$ in \eqref{(1)}, one speaks of a \emph{work reservoir} - two important idealized systems. We state the second law of thermodynamics in the Kelvin-Planck form (abbreviated second law (KP)):

\emph{Second law (KP)} No process is possible, the sole result of which is a change of energy of a simple system, without changing the work coordinates, and the raising of a weight.

The above is the paraphrasing of the usual Kelvin-Planck formulation of the second law (\cite{Cal},\cite{tHW}, \cite{WreEd}) due to Lieb and Yngvason (\cite{LYPR}, p.49), without using the concept of heat. This shows that the latter concept may be avoided completely when restricting oneself to the second law.

A process $C_{12} = X_{1} \to X_{2}$ is \emph{possible} if it complies with both the first and second laws. It is \emph{reversible} if the process $C_{21} = X_{2} \to X_{1}$ is also possible, otherwise it is \emph{irreversible}. A \emph{cycle} is a process $X \to X$, for some state $X$: it may be reversible or irreversible. Note that the definition of irreversibility does not involve the time as a parameter, i.e., no \emph{dynamics} is attached to it - which should be no surprise within the framework adopted here, which is that of equilibrium thermodynamics.

By further introduction of the Carnot cycle, together with the model of the ideal gas, one is able to prove Carnot's theorem (e.g., \cite{tHW}, 2.2. p.18), leading to the concept of absolute temperature $T$ (\cite{tHW}, p.20), and the formula

\begin{equation}
\frac{Q_{1}}{Q_{2}} = \frac{T_{1}}{T_{2}}
\label{(2)}
\end{equation}
for a reversible cycle, where $Q_{2}$ is the amount of heat absorbed at temperature $T_{2}$, $Q_{1}$ the amount of heat given off at temperature $T_{1}$, and $T_{2} > T_{1}$. Further elaboration (\cite{tHW}, p.24) leads to consider a system traversing a cyclic process $C$, exchanging heat with a series of heat reservoirs at temperatures $T_{1}, T_{2}, \cdots T_{n}$, $Q_{1}, Q_{2}, \cdots Q_{n}$ being the respective algebraic amounts of heat exchanged, positive when absorbed, and negative when given off by the system. Considering, now, the further process $C^{rev}$, consisting of $n$ reversible Carnot processes between each of the $n$ heat reservoirs at temperatures $T_{1}, \cdots, T_{n}$ and a new reservoir at temperature $T_{0}$, so designed that, in the second process, the quantities $Q_{1}, Q_{2}, \cdots Q_{n}$ are returned to the reservoirs at $T_{1}, T_{2}, \cdots T_{n}$, the composite process $C + C^{rev}$ will yield the result that the $n$ reservoirs are left unchanged. Using \eqref{(2)}, we obtain from the second law (KP) (see also \cite{tHW}, p. 24) 

\emph{Clausius' Inequality (1)}

\begin{equation}
Q_{0} = T_{0} \sum_{i=1}^{n} \frac{Q_{i}}{T_{i}} \le 0
\label{(3)}
\end{equation}
with the equality sign in \eqref{(3)} holding if and only if the process $C$ is also reversible. Taking the limit $n \to \infty$, whereby any cycle $Cy$ may be approached by a mesh of Carnot cycles, we obtain

\begin{equation}
\oint_{Cy} \frac{\tilde{d}Q}{T} \le 0
\label{(4)}
\end{equation}
From \eqref{(4)}, the concept of temperature as an ``integrating factor'' for the improper ``heat infinitesimals'' $\tilde{d} Q$ arises, as well as the concept of \emph{entropy}, denoted by $S(X)$ , a function of the state $X$, yielding a second form of Clausius' inequality \eqref{(3)}:

\emph{Clausius' Inequality (2)}:

In an adiabatic transformation from the state $X_{1}$ to the state $X_{2}$,
\begin{equation}
S(X_{1}) \ge S(X_{2})
\label{(5)}
\end{equation}

The equality in \eqref{(5)} occurs if and only if the process $C_{12} = X_{1} \to X_{2}$ is reversible.

\eqref{(5)} leads to Clausius' ``sweeping'' (in the words of ter Haar and Wergeland (\cite{tHW}, introduction, p. xiii.) formulation of the second law:

\emph{Second Law (Cl)}: The entropy of the Universe rises to a maximum value.

A drawback of the concept of entropy $S(X)$ is that it derives from \eqref{(4)}, i.e., from the formulation of thermodynamics through the use of differential forms, which requires differentiability of the various state functions, and there are points in state-space where differentiability does not hold, namely, when phase transitions occur (see the discussion in \cite{LYPR}, p.35). The only complete solution to this problem, as far as I know, is given in the axiomatic treatment of Lieb and Yngvason. I proceed, however, with an attempt at a textbook formulation, and assume that an entropy function has been constructed in the previous (usual) manner for simple pure states $X$ (see \cite{ASWight} for this concept), that is, outside multi-phase regions.

In the textbooks (\cite{WreEd}, \cite{Cal}, \cite{tHW}), one defines systems with \emph{internal barriers} or \emph{constrained systems}. Consider a \emph{closed} system consisting of two subsystems, between which an entirely restrictive wall (that is, to the establishment of equilibrium) exists - i.e., it does not allow any exchange of energy, change of volume or number of particles. This system is what is called a \emph{``constrained equilibrium''}. Assume that the first subsystem is in the state $X_{1} = (U_{1},V_{1},N_{1})$, and the second one in the state $X_{2} = (U_{2},V_{2},N_{2})$. One \emph{defines} the entropy of the system in constrained equilibrium by

\begin{equation}
S(U_{1},V_{1},N_{1};U_{2},V_{2},N_{2}) \equiv S(U_{1},V_{1},N_{1}) + S(U_{2},V_{2},N_{2})
\label{(6)}
\end{equation}
After removal of the wall, it is assumed (see later) that an equilibrium state $X = (U,V,N)$ is established. Since further reintroduction of the wall alters nothing whatever in the subsystems' equilibrium,, one may also visualize the equilibrium state $(U,V,N)$ as a special constrained equilibrium, and one may prove:

\begin{theorem}
\label{th:2.1}
Among all constrained equilibria of the same energy, the true equilibrium state has maximum entropy, i.e., the function $S(U,V,N)$ is superadditive:
\begin{equation}
\label{(7.1)}
S(U_{1}+U_{2}, V_{1}+V_{2},N_{1}+N_{2}) \ge S(U_{1},V_{1},N_{1}) + S(U_{2},V_{2},N_{2})
\end{equation}

Furthermore, if the inequality sign in \eqref{(7.1)} is strict, the process is \emph{irreversible}, and the entropy increase $\Delta S$:
\begin{equation}
\label{(7.2)}
\Delta S \equiv S(U_{1}+U_{2}, V_{1}+V_{2},N_{1}+N_{2}) - S(U_{1},V_{1},N_{1}) - S(U_{2},V_{2},N_{2})
\end{equation}

is a measure of irreversibility.

\end{theorem}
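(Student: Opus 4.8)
The plan is to realize the passage from the constrained equilibrium to the unconstrained one as a single adiabatic process, and then to read the inequality directly off Clausius' Inequality (2), \eqref{(5)}.

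First I would fix the process precisely. With the internal restrictive wall withdrawn, the composite system is thermally isolated and its total work coordinate $V=V_{1}+V_{2}$ is held fixed, so that no heat is exchanged with the exterior and no work is performed on the composite from outside; the only operation is the removal of a partition, which in the idealized limit requires no raising of a weight. Hence the transition is adiabatic in the sense adopted above, and the first law \eqref{(1)} with $\tilde{d}Q=\tilde{d}W=0$ forces $\Delta U=0$. Together with the conservation of volume and particle number, this identifies the equilibrium reached after removal as $X=(U_{1}+U_{2},\,V_{1}+V_{2},\,N_{1}+N_{2})$. At this step I would lean on the standing assumption, recorded just before the theorem, that a single well-defined equilibrium state is actually attained once the constraint is lifted.

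Next I would compare the entropies of the two endpoints. By the defining relation \eqref{(6)} the initial, constrained state has entropy $S(X_{1})+S(X_{2})$, while the final state has entropy $S(U_{1}+U_{2},V_{1}+V_{2},N_{1}+N_{2})$. Since the final state is the one realized spontaneously from the initial one by an adiabatic process, Clausius' Inequality (2) requires its entropy to be the larger, which is exactly the superadditivity \eqref{(7.1)}. Because every splitting of the fixed totals $(U,V,N)$ into two parts furnishes such a constrained equilibrium, while the true equilibrium is itself a special constrained equilibrium, the same inequality shows that the unconstrained equilibrium maximizes the entropy among all constrained equilibria of the same energy, volume and particle number.

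The second assertion I would draw from the equality clause of \eqref{(5)}: equality holds if and only if the wall-removal process is reversible, so a strict inequality in \eqref{(7.1)} is equivalent to its irreversibility; equivalently, by the second law (KP) the reverse transition — spontaneous re-separation of the composite back into the constrained state — cannot proceed adiabatically, for it would demand a decrease of entropy. The nonnegative quantity $\Delta S$ of \eqref{(7.2)} therefore vanishes precisely for reversible processes and is positive otherwise, which is the sense in which it measures irreversibility. The main obstacle I anticipate lies in this first step rather than in any algebra: justifying rigorously that lifting the internal wall is a bona fide adiabatic process — that neither heat nor work crosses the boundary of the composite and that no weight need be raised — and securing the tacit hypothesis that a unique final equilibrium exists, without which the left-hand side of \eqref{(7.1)} would not even be defined.
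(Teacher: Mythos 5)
Your argument is correct in substance and rests on the same two pillars as the paper's --- lifting the internal wall exchanges neither heat nor work with the exterior, and the second law then forbids a decrease of entropy --- but it reaches the conclusion by a genuinely shorter route. You invoke Clausius' Inequality (2), i.e.\ \eqref{(5)}, directly for the (irreversible, non-quasi-static) wall-removal process. The paper instead re-runs the cycle construction underlying \eqref{(3)}: after the unconstrained equilibrium $(U,V,N)$ is reached, it closes the process into a cycle by coupling to a family of reservoirs (equivalently, a single effective reservoir at $T_{0}$), and reads off $\frac{Q_{0}}{T_{0}}=S(U_{1},V_{1},N_{1})+S(U_{2},V_{2},N_{2})-S(U,V,N)\le 0$ from the second law (KP). That cycle construction is doing real work: \eqref{(4)}, and hence \eqref{(5)}, were obtained through differential forms along quasi-static paths, and the free removal of a wall is precisely not such a path; completing it into a cycle whose irreversible leg exchanges no heat is what legitimizes applying the Clausius inequality to it. So your shortcut is sound provided you grant that \eqref{(5)} has already been established for arbitrary adiabatic processes between equilibrium end-states, which is exactly what the paper's proof establishes in situ for this case. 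Two smaller remarks. First, \eqref{(5)} as printed reads $S(X_{1})\ge S(X_{2})$ for a transformation from $X_{1}$ to $X_{2}$; taken literally this points the wrong way, and you have silently used the physically correct (and clearly intended) form $S(X_{2})\ge S(X_{1})$ --- worth flagging explicitly rather than correcting tacitly. Second, for the irreversibility clause the paper argues via the Kelvin-Planck statement ($Q_{0}<0$ means work was converted into heat with no other change, whose reversal is forbidden), while you lean on the equality clause of \eqref{(5)}; the two are equivalent, but the paper's version makes the link to second law (KP) explicit.
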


\begin{proof}

Consider the process of \emph{removal} of the wall between two subsystems $(U_{1},V_{1},N_{1})$ and $(U_{2},V_{2},N_{2})$, maintaining the internal energy constant. After the equilibrium state $(U,V,N)$ has been attained, couple the system to a series of reservoirs at temperatures $T_{1}, T_{2}, \cdots T_{n}$, in such a way that the compound system performs a cycle, i.e., returns to the initial state. By the argument preceding \eqref{(3)}, this coupling is equivalent to couple the system to a unique heat reservoir at temperature $T_{0}$, say; let $Q_{0}$ denote the amount of heat exchanged with it. By the second law (KP), we find
$$
\frac{Q_{0}}{T_{0}} = S(U_{1},V_{1},N_{1}) + S(U_{2},V_{2},N_{2}) - S(U,V,N) \le 0
$$
which is \eqref{(7.1)}. The second assertion follows from the fact that, if $Q_{0} < 0$, work will have been transformed into heat (energy) without other changes, which is irreversible, again by the second law (KP).

\end{proof}

Together with the process of extensivity, or homogeneity of the first degree,

\begin{equation}
\label{(8)}
S(\lambda U, \lambda V, \lambda N) = \lambda S(U,V,N)
\end{equation}
the property of superadditivity \eqref{(7.1)} leads to the fundamental property of \emph{concavity}:

$S(X)$ is a concave function of $X=(U,V,N)$, i.e., for $0 \le \alpha \le 1$,
\begin{equation}
\label{(9)}
S(\alpha X_{1} + (1-\alpha) X_{2}) \ge \alpha S(X_{1}) + (1-\alpha) S(X_{2})
\end{equation}

We now make the

\emph{Assumption 1}

\eqref{(7.1)}, \eqref{(8)} and \eqref{(9)} are the \emph{basic properties of entropy}, valid also in the presence of phase transitions, i.e., in multi-phase regions.

The far-reaching meaning of assumption 1 is due to the fact (see, e.g., \cite{RV}) that a concave function (on an open set) is only continuous, but not necessarily differentiable: it may exhibit a countable number of points of (finite) discontinuity, which are thus naturally associated to phase transitions. They account for the beautiful theory exposed in \cite{ASWight}, essentially due to Gibbs, of multicritical points, based on the structure of the boundaries of convex sets. It is to be emphasized that Lieb and Yngvason obtain \eqref{(9)} from their set of axioms, which allow to construct the entropy function without assumption 1.  

The \emph{existence} of irreversible processes lies, as remarked in \cite{LYPR}, p. 35, at the very heart of thermodynamics. ``If they did not exist, it would mean that nothing is forbidden, and there would be no second law''. In a proper language, this is one of the axioms of \cite{LYPR} ((S1), p. 42), which is related to Carath\'{e}odory's principle, by Theorem 2.9, p. 35, of \cite{LYPR}. As a brief historical remark, Max Born was one of the few who recognized the importance of Carath\'{e}odory's work, already in the early twenties, see the very readable article by Landsberg \cite{Landsberg} and, as a complementary historical paper on the origin of exact and non-exact differentials in mechanics and thermodynamics, see \cite{Oliveira2}.

My first ``small point'' now follows (as revisited on the light of what I know today). Virtually all real physical processes are irreversible. Is it possible, in this connection, to provide at least one concrete, physically reasonable, example of an irreversible process, and relate it to Clausius' formulation of the second law, in which time is implicit as a parameter?

The next section is reserved to a possible answer to this question. For clarity, I divide it into four parts.

\section{A possible answer and some difficulties}

\subsection{The free adiabatic expansion of the ideal gas}

We now consider $N$ particles of an ideal gas, initially inside a container of volume $V$, which is allowed to expand adiabatically into another container of the same volume $V$ (for simplicity), in which, originally, vacuum had been established (Gay-Lussac experiment, \cite{tHW}, p.36). We use theorem ~\ref{th:2.1} with $U_{2}=0$, $V_{1}=V_{2}=V$, $N_{1}=N_{2}=N$, and find

\begin{equation}
\Delta S = S(U,2V,N) - S(U,V,N)
\label{(10)}
\end{equation}

For the ideal gas, from the two state equations
\begin{equation}
\frac{1}{T} = \frac{3}{2} k \frac{N}{U}
\label{(11.1)}
\end{equation}
and
\begin{equation}
\frac{p}{T} = k \frac{N}{V}
\label{(11.2)}
\end{equation}
with $T$ the absolute temperature and $p$ the pressure, we find
\begin{equation}
S(U,V,N) = Ns(u,v)
\label{(12.1)}
\end{equation}
with $u = \frac{U}{N}$, $v = \frac{V}{N}$. From
\begin{equation}
ds = \frac{1}{T} du + \frac{p}{T} dv
\label{12.2)}
\end{equation}
we obtain
\begin{equation}
s(u,v) = s_{0} + \frac{3}{2} k \log(\frac{u}{u_{0}}) + \log(\frac{v}{v_{0}})
\label{(13)}
\end{equation}
where $s_{0},u_{0},v_{0}$ are constants. By \eqref{(10)} - \eqref{(13)}, and the fact that $u$ is constant in the process, we finally obtain
\begin{equation}
\Delta S = k N \log(2)
\label{(14)}
\end{equation}
This result is reproduced in most textbooks. Is it physically reasonable? When we open a valve and let gas rush into a vacuum, macroscopic motions occur. Although it may seem ``obvious'' that the gas will fill the second container uniformly, when equilibrium is attained, this is not at all evident! Indeed, why does an inhomogeneity at some place in a gas disappear, from a physical standpoint? Intuitively, we expect that it is the collisions between the molecules (think of them as billiard balls) which finally produce a uniform distribution. This intuition is refined in section 4, where it is related to the property of mixing and $K$ systems. But the molecules of a free gas do not interact! More precisely, the free gas is shown there not to satisfy the mixing property. We have, therefore, not provided a ``physically reasonable'' example of an irreversible process, and try again in the next section.

\subsection{The free, adiabatic expansion of a van der Waals gas outside the saturation region}

The simplest model of an interacting gas is the van der Waals gas (see \cite{tHW}, pp. 4-7, or \cite{WreEd}), which is a caricature of a hard-core repulsion. It is described by the equation of state
\begin{equation}
\label{(15)}
p = \frac{kT}{v-b} - \frac{a}{v^{2}}
\end{equation}
where $a > 0$, $b > 0$ are parameters, related to the critical data of the gas. For sufficiently large $T$, the term $\frac{a}{v^{2}}$ and the correction $b$ in \eqref{(15)} may be neglected, and, thus, the specific heat at constant volume $C_{V}$ is close to the ideal gas value $C_{V} = \frac{3}{2} Nk$ resulting from \eqref{(11.1)}; in the sequel we assume
\begin{equation}
C_{V} = constant = C \mbox{ with } 0< C < \infty
\label{(16)}
\end{equation}
which is a good approximation ``sufficiently far'' from the saturation region: \eqref{(16)} is rigorously controllable by suitable bounds. In this region, differential forms (see, e.g., \cite{Dor}, p.12) may be used freely. Using \eqref{(16)}, we find immediately, for the reversible process by which the gas initially occupies the volume $V_{i}=V$ and expands to the final volume $V_{f}=2V$, the condition of constant internal energy
\begin{equation}
\label{(17)}
dU = C dT + \frac{aN}{v^{2}} dv = 0
\end{equation}
from which
$$
dT = - \frac{aN}{C} \frac{dv}{v^{2}}
$$
The initial temperature $T_{0}$ (with $0 < T_{0} < \infty$) is fixed by the initial (constant) energy: $U(V,T_{0}) = C T_{0} - \frac{a}{V_{i}}$ (up to a constant which we may fix as zero), and we obtain
\begin{equation}
T = T_{0} + \frac{aN}{Cv}
\label{(18)}
\end{equation}
From \eqref{(17)} and \eqref{(18)},
$$
dS = \frac{p}{T} dV = k N \frac{dv}{v-b} - \frac{aN}{(\frac{aN}{Cv}+T_{0})v^{2}} dv
$$
from which
\begin{equation}
S(N,v) = kN \log(v-b) + \frac{aN}{\mu} \log(1+ \frac{\mu}{T_{0}v}) + \mbox{ const. }
\label{(19)}
\end{equation}
where
\begin{equation}
\mu \equiv \frac{aN}{C}
\label{(20)}
\end{equation}
Finally, we find
\begin{equation}
S(N,2v) - S(N,v) = S_{1}(N,v) + S_{2}(N,v)
\label{(20.1)}
\end{equation}
where
\begin{equation}
S_{1}(N,v) \equiv k N \log(\frac{2v-b}{v-b})
\label{(20.2)}
\end{equation}
and
\begin{equation}
S_{2}(N,v) \equiv C \log(\frac{1+\frac{\mu}{2T_{0}v}}{1+\frac{\mu}{4T_{0}v}})  
\label{(20.3)}
\end{equation}
$S_{1}$ is the Boltzmann ``probabilistic term'', which does not depend on $a$ and tends to \eqref{(14)} as $b \to 0$. $S_{2}$ independs of $b$ and tends to zero as $a \to 0$. Thus, $S(N,2v)-S(N,v) \to kN\log(2)$, which is the ideal gas result \eqref{(14)}, as expected, as $b \to 0$ and $a \to 0$. Moreover, $S_{2}$ may be interpreted as an ``interaction term'', which is strictly positive, because
\begin{equation}
S_{2}(N,v) > 0 \mbox{ if } \mu > 0
\label{(20.4)}
\end{equation}

It is interesting to observe that the inclusion of interaction leads to a measurable effect, i.e., the cooling observed in the Gay-Lussac experiment (see \eqref{(18)} and \cite{tHW}, p. 37).

This exercise becomes much more difficult in the saturation region! There, Assumption 1 leads to the construction of $S(U,V,N)$, by Lebowitz and Penrose \cite{LP1}, but \eqref{(16)} no longer holds, of course, and the differential forms, in general, are not well-defined.

The structure of \eqref{(20.1)}, \eqref{(20.2)}, \eqref{(20.3)} shows that the probabilistic term and the interaction term do not interfere in the entropy variation function. This is due to the fact that dynamics, which reduces to the process of lifting the barrier (or opening the valve), is \emph{independent} of the internal energy of the system and hence of the state. This looks like a drastic approximation, and, indeed, it is! In the next section, I try to expand on this point. 

\subsection{Dynamics, or no dynamics, and what to expect}

What is, then, the answer to the question posed at the end of section 2? A \emph{tentative} answer is that both examples in subsections 3.1 and 3.2 are not physically reasonable dynamically, i.e., if we wish to go on with second law (Cl), in which time is implicit as a parameter. There are at least two solutions of this dilemma:
\begin{itemize}
\item [($a.)$] No dynamics. In this case we have either of two possibilities:
\item [($a1.)$] Existence of irreversible processes as a (crucial) assumption in \cite{LYPR}, (S1), p.42, as previously mentioned. With it, however, a full theory is developed, which, in particular, does not use differential forms, and is thus extendible to multi-phase regions;
\item [($a2.)$] ``concrete'' existence of irreversible upon inclusion of internal barriers (constrained systems) in the equilibrium formalism, with, however, the use of differential forms, i.e., excluding multi-phase regions.
\end{itemize}

It goes without saying that the levels of mathematical rigor in $a1$ and $a2$ are quite different, $a2$ being at a much lower (elementary textbook) level.
\begin{itemize}
\item [$(b.)$] Dynamics.
\end{itemize}
In this case we have ``concrete'' existence of irreversible processes, if we make use of the possibility of considering (as in \cite{WreEd}) a system with internal barriers as a \emph{special} example of a \emph{nonequilibrium state}. Indeed, these barriers are an example of a ``sudden'' interaction, which introduces an instantaneous global change in the system - i.e., the gas expands \emph{instantaneously}, filling the total volume uniformly. These interactions contain a ``delta function'' in the time variable, which change the energy by an \emph{infinite} amount. They are, therefore, physically inadmissible, but may be regarded as a limiting case of certain physically admissible interactions, according to which we expect that the gas will eventually (after a relaxation time) fill the total volume uniformly in the free adiabatic vacuum expansion (Gay-Lussac experiment). This means going beyond the soluble example of section 3.2, for which purpose we need an

\emph{Assumption 2}  Any initial state $X_{0} \equiv (U_{0},V_{0},N_{0})$ approaches an equilibrium state $X_{\infty} \equiv (U_{\infty},V_{\infty},N_{\infty})$ as $t \to \infty$.

By the second law (Cl), $X_{\infty}$ should be a maximum of the entropy. This statement should be understood in the sense of theorem ~\ref{th:2.1}: for an arbitrary division of the system by an impenetrable wall such that \eqref{(6)} holds, the equilibrium entropy is a maximum in the sense that:
\begin{itemize}
\item [($c.)$] the variation $\delta S(X_{\infty}) = 0$;
\item [($d.)$] for any constrained equilibrium $X \ne X_{\infty}$, 
\begin{equation}
\Delta S(X) > 0
\label{(21.1)}
\end{equation}
\end{itemize}
Above, the variation $\delta S(X_{\infty})$ is defined within the class of constrained equilibria, that is,
$$
\delta S(X_{\infty}) \equiv \frac{d}{dt} [S_{1}(U_{1,\infty}+tU,V_{1,\infty}+tV) + S_{2}(U_{2,\infty}-tU,V_{2,\infty}-tV)]_{t=0}
$$
for $U,V$ arbitrary, $U_{\infty}=U_{1,\infty}+U_{2,\infty}$ and $V_{\infty}=V_{1,\infty}+V_{2,\infty}$. Using $\frac{\partial S}{\partial U}=\frac{1}{T}$, together with $\frac{\partial S}{\partial V} = \frac{p}{T}$, we obtain as equilibrium conditions
\begin{equation}
\label{(21.2)}
T_{1} = T_{2}
\end{equation}
as well as
\begin{equation}
p_{1} = p_{2}
\label{(21.3)}
\end{equation}
Since the above hold for any subdivision, \eqref{(21.3)} implies, of course, that the gas fills the total container uniformly. 

The quantity $\Delta S$ in \eqref{(21.1)} is defined by \eqref{(8)} of theorem ~\ref{th:2.1}, with $(U,V,N)$ there identified as $(U_{\infty},V_{\infty},N_{\infty})$. Thus, item $d$ above means that any transition from the given equilibrium state to a different constrained equilibrium which occurs spontaneously, i.e., without changes in the surroundings, is forbidden by the second law {KP). When d.) is satisfied, the equilibrium is called \emph{stable}.

\begin{remark}
\label{Remark 3.1}
Since every subsystem of an equilibrium state is itself in equilibrium, i.e., the insertion of any impermeable wall does not change the equilibria of the subsystems, it is possible to describe the thermodynamics of a multi-phase system by its pure phases \cite{ASWight}.
\end{remark}

\begin{remark}
\label{Remark 3.2}
Due to the fact that systems with internal barriers comprise only a (very) special class of perturbations of equilibrium states, definition $d$ of a stable equilibrium is not quite complete: according to the physical situation, several types of stability might be envisaged, ranging from local to nonlocal perturbations.
\end{remark}

The understanding of dynamics is, of course, essential to grasp the (real) nonequilibrium phenomena observed in Nature. There are several approaches to this important issue: see \cite{LYNE} and references given there, as well as the paper by Abou-Salem and Fr\"{o}hlich \cite{AF} and references given there. If we proceed along the lines of Assumption 2, progress may require the interplay of thermodynamics with other areas of physics, a subject to which we now turn.

\section{The interplay between thermodynamics and other areas of physics: dynamical systems and statistical physics}

\subsection{The picture furnished by the theory of dynamical systems}

Consider the dynamical system generated by the motion of $N$ matter points (``gas molecules'') in a fixed volume $V$, and let $\Gamma=\{x\equiv(q_{1},p_{1}), i=1, \cdots 3N \}$ denote the corresponding phase space, with $q_{1}$ being the generalized coordinates and $p_{i}$ the momenta. If the system is isolated (fixed total energy $E$), $\Gamma$ will be compact, and there exists an invariant measure $\mu$ which may be interpreted as the distribution in thermodynamic equilibrium \cite{Sz}. Starting from initial conditions in a certain volume $V_{1}$, the set of admissible coordinates and momenta of the molecules forms a subset $A \in \Gamma$. The formula
\begin{equation}
\mu_{0}(B) = \frac{\mu(A \cap B)}{\mu(A)}
\label{(22.1)}
\end{equation}
defines a \emph{initial} distribution (which is not the equilibrium distribution), interpreted as \emph{conditional} distribution relatively to the system's known initial condition. We may now relate $\mu_{0}$ to the knowledge of the state of the system at time $t$ by the measure $\mu_{t}$ defined by
\begin{equation}
\mu_{t}(B) = \mu_{0}(T_{t}(B))
\label{(22.2)}
\end{equation}
where $T_{t}$ is the (Hamiltonian) time evolution (flow) which leaves $\mu$ invariant. In the given example
\begin{equation}
\mu(F) = \mu_{E}(F) = \int_{F} \delta(H(x)-E) dx
\label{(22.3)}
\end{equation}
is the \emph{microcanonical Gibbs measure}, where $H(x)$ is the Hamiltonian describing the system and \eqref{(22.2)} is \emph{Liouville's theorem}. Suppose, now, that $\mu_{0}$ is absolutely continuous (a.c.) with respect to $\mu$, i.e., there exists $\rho_{0} \in L^{1}(M=\Gamma,d\mu)$ (integrable w.r.t. phase-space (Lebesgue) measure) such that
\begin{equation}
\label{(22.4)}
d\mu_{0} = \rho_{0}(x) d\mu
\end{equation}
Then,
\begin{equation}
\mu_{t}(B) = \int_{M} \chi_{B}(T_{t}x)d\mu_{0} = \int_{M} \chi_{B}(T_{t}x)\rho_{0}(x)d\mu
\label{(22.5)}
\end{equation}
We may define the property of \emph{mixing} by the relation 
\begin{equation}
\lim_{t\to\infty}\mu_{t}(B) = \int_{M} \chi_{B}d\mu \int_{M}\rho_{0}d\mu = \mu(B)1 = \mu(B)
\label{(22.6)}
\end{equation}
which means that, whatever the initial distribution, normalized and a.c. w.r.t. $\mu$, the time-translates $\mu_{t}$ of $\mu_{0}$ under $T_{t}$ converge, for $t \to \infty$, to the equilibrium distribution. Note that the \emph{existence} of the limit on the l.h.s. of \eqref{(22.6)} is a part of the assumption: \eqref{(22.6)} may be taken as the definition of the \emph{approach to equilibrium}. This subject is treated in much greater detail and depth, of course, in the articles and book by Penrose \cite{Pen1}, \cite{Pen}, to which I must refer for a better understanding of the concepts introduced here, as well as a wealth of applications, and further references. A simple, illuminating introduction is provided by the article \cite{LP}.

What does \eqref{(22.6)} mean? It means that mixing systems are ``memoryless'', i.e., they posess a stochastic character which justifies the probabilistic framework of equilibrium statistical mechanics presented at the introductory texts (for a recent, specially clear and pedagogical exposition, see \cite{Salinas}). The microscopic mechanism of this loss of memory is the sensitive (exponential) dependence on initial conditions produced by ``defocalizing shocks'' between the gas molecules, first pointed out by Krylov (see \cite{Sinai1} and references given there).

We may picture the gas molecules as a system of hard spheres enclosed in a cube with perfectly reflecting walls or periodic boundary conditions. this is supposed to be a $K$-system (see \cite{Wa}, Definition 4.7, p.101). Rising still one step in this so-called ergodic hierarchy (see \cite{LP}), we come to \emph{Bernouilli systems}, such as the one we presently introduce.

Define $T_{2}$ by $T_{2} : X \to Y$, where $X=[0,1]$ and $Y=[0,1]$, by
\begin{equation}
T_{2}x = fr(2x) \equiv 2x \mod 1
\label{(23.1)}
\end{equation}
Above, $fr(x) = x-[x]$, where $[x]$ denotes the largest integer which is smaller or equal to x. Note that this mapping is not one-to-one (it is a so-called endomorphism). In fact, we see that the inverse image of a point $x$ is either $\frac{x}{2}$ or $\frac{x+1}{2}$. $T_{2}$ is called the dyadic transformation and leaves Lebesgue measure $\mu$ (on the line) invariant, because the inverse image of a point $y$ is
\begin{equation}
T_{2}^{-1}(y) = \{\frac{y}{2}\} \cup \{\frac{y+1}{2}\}
\label{(23.2)}
\end{equation} 
Indeed, from \eqref{(23.2)}, $\mu(T_{2}^{-1}([0,y]))= y$, which generalizes to
\begin{equation}
\mu(A) = \mu(T_{2}^{-1}(A)) \mbox{ for any Borel set } A \in [0,1]
\label{(23.3)}
\end{equation}
$T_{2}$ has a simple alternative description. Let $x$ be given by its expansion in basis $2$, i.e., 
\begin{equation}
x = .\epsilon_{1} \epsilon_{2} \cdots = \sum_{n=1}^{\infty} \frac{\epsilon_{n}}{2^{n}} \mbox{ with } \epsilon_{n} \in \{0,1\}
\label{(23.4)}
\end{equation}
Then, it is easy to see that
\begin{equation}
T_{2}x = .\epsilon_{2}\epsilon_{3} \cdots \mbox{ and in general } T_{2}^{n}x = .\epsilon_{n+1}\epsilon_{n+2}, \cdots
\label{(23.5)}
\end{equation}
that is, $T_{2}$ is the ``one-sided shift'' in this representation. Given a Lebesgue integrable function, i.e.,
\begin{equation}
f \in L^{1}(0,1)
\label{(23.6)}
\end{equation}
we may define the Ruelle-Perron-Frobenius operator $P$ (\cite{LaM}, \cite{MWB}) by
\begin{equation}
\int_{A} (Pf)(x)dx = \int_{T_{2}^{-1}(A)} f(x)dx
\label{(23.7)}
\end{equation}
If
\begin{equation}
f_{0}(x) \equiv 1
\label{(23.8)}
\end{equation}
it follows from \eqref{(23.4)} and \eqref{(23.7)} that 
\begin{equation}
Pf_{0} \equiv 1
\label{(23.9)}
\end{equation}
Let $A=[0,x]$. From \eqref{(23.7)}, it follows that
\begin{eqnarray*}
(Pf)(x) = \frac{d}{dx} (\int_{0}^{\frac{x}{2}} f(u) du + \int_{\frac{1}{2}}^{\frac{x+1}{2}} f(u)du)=\\
= \frac{1}{2} [f(\frac{x}{2})+f(\frac{x+1}{2})]
\end{eqnarray*}
from which, by iteration,
\begin{equation}
(P^{n}(f))(x) = \frac{1}{2^{n}} \sum_{k=0}^{2^{n}-1} f(\frac{x+k}{2^{n}})
\label{(24.1)}
\end{equation}
We shall say that $f$ is a \emph{density} if $f \ge 0 \mbox{ a.e. } \in (0,1)$ and $\int_{0}^{1}f(x)dx = 1$; $f_{0}$, given by \eqref{(23.8)}, is the \emph{uniform density}. The words $a.e.$ stand for almost everywhere, that is, in the complement of a set with (Lebesgue) measure zero. By \eqref{(23.7)} and \eqref{(23.9)}, $P$ maps densities to densities, and, by \eqref{(24.1)},
\begin{equation}
\lim_{n\to\infty}(P^{n}f)(x) = \int_{0}^{1} f(y) dy = 1
\label{(24.2)}
\end{equation} 
Together with \eqref{(23.9)}, \eqref{(24.2)} is a form of the property of approach to equilibrium \eqref{(22.6)} (see also \cite{LaM}, Theorem 4.4.1, p.65): the evolution of any density tends (for (discrete) time $n \to \infty$) to the unique invariant density, the uniform density $f_{0}$ given by \eqref{(23.8)}.

Of particular importance above is that $P$ is not defined on individual orbits ot the map, which would correspond to taking $f$ in \eqref{(23.7)} to be a ``delta function'', which is not allowed by condition \eqref{(23.6)}. Indeed, these individual orbits behave rather erratically. Consider two points $x_{1},x_{2}$, both in $[0,1]$, close in the sense that the first $n$ digits in the expansion \eqref{(23.4)} are identical. By \eqref{(23.5)}, it follows that $T_{2}^{n}x_{1}$ and $T_{2}^{n}x_{2}$ differ already in the first digit: an initially exponentially small difference $2^{-n}$ is magnified by the evolution to one of order $O(1)$. When this property holds for $n$ arbitrarily large, as in the present example, one speaks of the \emph{exponential sensitivity to initial conditions} mentioned before in connection with Krylov's mechanism. This is the aforementioned stochastic element: for almost all $x_{0}$ (in the sense of Lebesgue measure, i.e., excluding a set $\gamma$ of zero Lebesgue measure), $T_{2}^{n}x_{0}$ comes arbitrarily close to almost any $x \in [0,1]$ if $n$ is taken arbitrarily large, or, in other words, it ``fills'' the whole interval uniformly throughout the evolution, and, thus, $\lim_{n\to\infty} T_{2}^{n}x_{0}$ does not exist for a.e. $x_{0}$. 

The set $\gamma$ consists of the finite dyadic numbers $x_{f}$, i.e., those whose dyadic expansion \eqref{(23.4)} is finite; it is immediate from the definiton of $T_{2}$ that $T_{2}^{n}x_{f} \to 0 \mod 1$, the latter being the fixed points of the dyadic map: they are untypical, in the sense that they do not fill the interval uniformly. They may be analogous to some ``bad'' initial configurations of the gas (not! initial states $(U,V,N)$: a configuration is a set of values of position and momentum coordinates of the $N$ particles in the gas inside the volume $V$, such that the total internal energy is $U$), for example, those particle configurations with all initial velocities directed away from the barrier which is lifted. Their measure in phase space (in three dimensions) is also zero, similarly to the set $\gamma$.

As a final remark, to connect with section 3.1, a map with the property of mixing is necessarily ergodic (\cite{Pen}, \cite{Pen1}, \cite{LP}), whose definition may be taken to be: a flow $T_{t}$ is ergodic if and only if any invariant function $\Phi$ under the flow (i.e., such that $\Phi(T_{t}(x))=\Phi(x)$ for a.e.$x$ is a.e. a constant. Thus, assuming the microcanonical measure \eqref{(22.3)}, any function $\Phi$ invariant under the flow is a functional of the total Hamiltonian $H(x)$. For a free system, however, $H = \sum_{i=1}^{N} H_{i}$, with $N \ge 2$ (at least two particles), and \emph{each} $H_{i}, i=1, \cdots, N$ is invariant under the flow. Thus, a free system is not ergodic, and, therefore, not mixing.

\subsection{Connections with statistical physics}

We now discuss additional points of the possible connection between the previous discussion and the vacuum expansion of a gas of a large number $N$ of molecules. We have seen that, for $T_{2}$, the evolution of densities does approach a limit, in contrast, in general, to that of individual orbits. This means that a ``coarse-graining'' in the space variables implements a kind of ``restoring force'' which pushes toward equilibrium: a density is defined by its values on an \emph{infinite} number of points. This phenomenon is, for a deterministic system, much subtler than the analogous one in an \emph{a priori} probabilistic context. One example of the latter is the Ehrenfest urn model in the elementary theory of Markov chains (\cite{Chu1}, Example 16, p.283).

Remaining in the deterministic framework, the above mentioned coarse-graining is analogous to consider an \emph{infinite} number of molecules, i.e., to take the thermodynamic limit $N \to \infty$, together with $V \to \infty$, with $\frac{N}{V} = \rho$, the particle density, taken to be a constant. A possible approach to prove the analogue of Assumption 2 follows the ideas of \cite{therm2}.

Instead of the entropy $S(U,V,N)$ the primary object is the \emph{specific entropy}, i.e., the entropy per unit particle (or unit volume) in the thermodynamic limit (assuming it exists), which in the previous discussion would correspond to
\begin{equation}
s(u,v) = \lim_{N,V \to \infty;\frac{V}{N}=v;\frac{U}{N}=u} \frac{1}{N} S(U,V,N)
\label{(25.1)}
\end{equation}
The thermodynamic limit performs a ``coarse graining'' in the particle system, which is analogous to considering the evolution of densities, instead of individual trajectories, in dynamical systems. One might therefore expect that, starting from initial values $(u_{0},v_{0})$ and an initial specific entropy 
\begin{equation}
s(u_{0},v_{0}) \equiv s_{0}
\label{(25.2)}
\end{equation}
and setting $u_{t}=T_{t}u_{0}$, $v_{t}=T_{t}v_{0}$, where $T_{t}$ is the flow in \eqref{(22.2)},

\emph{Modified Second Law(Cl)}
\begin{equation}
s_{max} = \lim_{t \to \infty} s(u_{t},v_{t})
\label{(25.3)}
\end{equation}

The (analogue of the) modified Second Law (Cl) has been demonstrated for a class of quantum systems in \cite{therm2}, with the nontrivial feature that
\begin{equation}
\label{(25.4)}
s_{max} > s_{0}
\end{equation}
In \cite{therm2}, the von Neumann definition of entropy 
\begin{equation}
S_{vn}(N,V) = -k tr \rho_{N,V} \log \rho_{N,V}
\label{(26.1)}
\end{equation}
where $\rho_{N,V} \ge 0$ is an operator of unit trace on the Hilbert space of a quantum system of $N$ particles in a volume $V$ (see \cite{Wehrl1}), i.e.,
\begin{equation}
tr \rho_{N,V} = 1
\label{(26.2)}
\end{equation}
The classical analog of $S_{vn}$ is the Gibbs entropy $S_{G}$:
\begin{equation}
S_{G}(N,V) = -k \int_{\Gamma} dx \rho(x) \log \rho(x)
\label{(26.3)}
\end{equation}
in the notation of \eqref{(22.1)} et seq., with $\rho$ being the density of the invariant measure $\mu$ , assumed to be absolutely continuous w.r.t. Lebesgue measure $m=dx$ on phase space $\Gamma$ corresponding to the distribution in thermodynamic equilibrium, i.e., 
\begin{equation}
\mu(B) = \int_{B} \rho(x) dx
\label{(26.4)}
\end{equation}
where $B$ is any (Borel) subset of $\Gamma$.

\subsection{The Modified Second Law (Cl) and the problem of proving Assumption 2}

The theorem proved in \cite{therm2} renders mathematically precise a result of Gibbs \cite{Gibbs}, as reformulated by Penrose \cite{Pen1}. It relies on the fact that the specific entropy is not continuous as a function of $t$, but rather only \emph{upper semicontinuous}. The significance of this property (together with its precise definition) is well discussed in the classic book by Geoffrey Sewell (\cite{Sewell1}, 3.2.3). The associated Modified Second Law (Cl) given in the previous subsection seems also quite natural from the physical point of view, because it is the specific entropy that is measured. an example of this is to be found in section 5, where the analogous quantity (entropy per unit volume) is related to the number of photons per unit volume in the model of the cosmic microwave background radiation (CMB) which pervades the Universe.

The thermodynamic limit reflects a coarse graining in the space variables, as discussed in the previous section, whose importance is as crucial as the corresponding operation in the theory of dynamical systems (see \eqref{(24.2)}). The reason is that, in general, the thermodynamic limit does not commute with the long-time limit in \eqref{(25.3)}, and it is precisely this feature which enables \eqref{(25.4)}. For finite $N,V$, the Penrose-Gibbs theorem (\cite{Pen1}, p.1959) yields equality in \eqref{(25.4)}, because the full entropy $S(U_{t},V_{t})$ (for fixed $N$) is continuous rather than upper semicontinuous. This noncommutativity will reappear (explicitly!) in a class of models for the cosmological evolution of the photons in the CMB in section 5 (the forthcoming proposition 5.1). 

The problem remains, however, to prove Assumption 2 in a wider class of models, classical and quantum. This is the famous problem of the approach to equilibrium in closed systems. The greatest progress for quantum continuous systems is due to Narnhofer in Thirring, for a model of interacting fermions \cite{NTh4}.

Summarizing: the central issues in the proof of the Modified Second Law (Cl) are:
\begin{itemize}
\item [$a.)$] the initial state $X \ne X_{eq}$;
\item [$b.)$] a coarse graining in the space variables, represented by considering the specific entropy rather than the entropy;
\item [$c.)$] the approach to equilibrium for the state must be proved (Assumption 2);
\item [$d.)$] the time evolution is deterministic. 
\end{itemize}

\subsection{Irreversibility and the time arrow}

As remarked by Griffiths \cite{Gr}, since two bodies at unequal temperatures, but in thermal contact, are known to exchange energy in such a way that the temperatures approach each other even in the presence of a magnetic field, which breaks time-reversal invariance, the latter is certainly not the key for understanding macroscopic irreversibility.

Most of the states in Nature are unstable (nonequilibrium) states. In atomic and molecular physics, for instance, all states with the exception of the ground state are resonances, in particle physics, all but the lightest particles are unstable.

Assume an initial (unstable) state at $t=t_{0}$, of a given atomic resonance, together with the electromagnetic field, which decays by emission of a photon, to a final state of the composed system (see \cite{Wreszunst} and references given there). The initial unstable state must have been prepared at some (finite) time in the past, i.e., at some time $-\infty < t_{p} < t_{0}$, by some process (e.g., resonant scattering), which delivered to the (compound) system a \emph{finite} amount of energy. We have suggested in \cite{therm2} and \cite{Wreszunst} that this preparation of the state is not time-reversal invariant: this is the reason for the existence of an \emph{arrow of time}. 

A beautiful discussion of irreversibility is in section 3.8 of Peierls' book \cite{Peierls}. On the first paragraph of p. 76, he makes a statement in the same spirit of the above-mentioned preparation of the state. He further adds: ``We thus see that the asymmetry arises, not from the laws governing the motion, but from the boundary conditions we impose to specify our question''.

Once given a definite time direction, the Modified Second law (Cl) implies \emph{irreversibility}, as long as \eqref{(25.4)} holds.

\section{The Universe and the interplay between thermodynamics and field theory: the second small point}

Together with the Second law (Cl), Clausius formulated the first law in the form

\emph{First law (Cl)} The energy of the Universe is constant.

As a student, I knew that, when studying the large-scale dynamics of the Universe, the basic tool is classical field theory (in the form of classical general relativity, see \cite{Lud} as a lucid intermediate text, and \cite{Thirr3rd} for an advanced treatment). 

Dyson \cite{Dyson} has studied in detail whether there are any conceptual arguments which require that the gravitational field be quantized, and, on the basis of the classic paper by Bohr and Rosenfeld \cite{BohrR}, arrived at the answer: no. He then investigated whether, experimentally, the graviton may be undetectable, and arrived at the answer: yes, with great probability. There is, therefore, great probability that, experimentally, classical general relativity will provide a description of the physical world which is indistinguishable from the outcomes of a prospective ``quantum gravity''. This remark also applies to the interaction between quantum particles, e.g., photons, and the Universe: there is a large probability the results of the (forthcoming) ``semiclassical'' description are not distinguishable from those from a (hypothetical) fully quantum theory, whose existence remains doubtful.

In general relativity, however, the space-time metric tensor $(g_{ab})$ depends on space and time, and, thus, the time-homogeneity does not hold. Thus, Noether's theorem (\cite{Thirr3rd}, 8.1.5., p. 331) is not applicable, and First law (Cl) does not hold!

Although, in First Law (Cl) the word ``Universe'' was (probably) symbolic for any closed system, the Universe happens to be a paradigm of a (or the) closed system in thermodynamics, and this failure - my ``second small point''- seemed, at the time, to be a catastrophe, in particular because of the supposedly universal character of thermodynamics.

In this section, I attempt to review what is known both on the first and second laws of thermodynamics for the Universe on the light of present knowledge - still keeping with textbook level.

Before considering large-scale properties of the Universe, it is useful to consider non-relativistic models of cosmic sized systems. One example is the system of $N$ neutral massive fermions, interacting via Newtonian forces. This is a model of a neutron star (pulsar), for which the energy $U = -c N^{7/3}$, where $c$ is a positive constant (see \cite{MaRo}, Sec. 2.2.3, for a textbook discussion, \cite{HNT} for a rigorous derivation and \cite{Sewell2} for a comprehensive review). Due to this non-extensivity of the energy, as a consequence of the long-range and attractive charater of the gravitational interaction, the property of concavity of the entropy $S$ \eqref{(9)} does not follow as before and, indeed, there is a regime in which $S$ is \emph{convex} with respect to $U$, and the system undergoes a phase transition of the van der Waals type (see \cite{HNT} and \cite{Sewell2}). In this regime, the specific heat $C_{V} < 0$, mirroring a stage in the stellar evolution in which the star has exhausted the fuel that would burn at that temperature: its core then contracts and heats up, while energy is liberated to the surface, which expands and becomes cooler. Since the star may be considered as an isolated object, this process corresponds to one in which heat (energy) passes by itself from a colder to a hotter body, violating one of the forms of the second law (also due to Clausius). It is, thus, in general, not easy to define entropy, and its increase, in the absence of extensivity, but, remarkably, an extension of the treatment of \cite{LYPR} does allow this: see \cite{LYEM}.

Parenthetically to the above discussion, and in conformance with the property of isolation, the statistical mechanical description, as given, of the aforegoing process of star collapse is restricted to the \emph{microcanonical ensemble}: the result in the canonical ensemble is different. In particular, in the latter ensemble, the specific heat is expressed as an energy fluctuation, which is always positive.

Beyond a certain value of $N$, the non-relativistic models of stars become unphysical, because the mean particle velocities attain values comparable with the velocity of light, and general relativity comes into play. We now turn to this case.

The \emph{cosmological principle} - large-scale spatial homogeneity and isotropy of the Universe - implies that the Universe is ruled by the general (so-called Robertson-Walker) metric (\cite{Thirr3rd}, 10.4.2), \cite{Lud}), which is defined by the ``element of arc''
\begin{equation}
ds^{2} = c^{2}dt^{2}-[R(t)]^{2} \frac{\sum_{a=1}^{3} dx^{a}dx^{a}}{[1+\frac{\kappa}{4}\sum_{a=1}^{3}x^{a}x^{a}]^{2}}
\label{(27.1)}
\end{equation}
Above, $t$ denotes the ``cosmic time'' (henceforth referred to simply as ``time''), $R(t)$ is the so-called \emph{scaling factor} (see \cite{Lud}, whose general lines we adopt), and $\kappa$ denotes the curvature. We fix a unique time $t$ by taking $t=0$ at the ``big bang'', with $\lim_{t \to 0+} R(t)=0$. It is believed that the (present) Universe is flat (zero curvature), with positive so-called cosmological constant $\Lambda$, in which case the exact solution of Einstein's equations is
\begin{equation}
R(t) = [\frac{3}{2} \frac{c}{\Lambda} (\cosh(ct(3\Lambda)^{1/2})-1]^{1/3}
\label{(27.2)}
\end{equation}
(see, e.g., \cite{Pathria}, p.279): it is an ever-expanding Universe, for which
\begin{equation}
\lim_{t \to \infty} R(t) = \infty
\label{(27.3)}
\end{equation}
For a given value of $\Lambda = \Lambda_{c}$, one of the solutions of Einstein's equations is purely static, corresponding to a constant value of $R$, defining the \emph{Einstein model}
\begin{equation}
R = R_{c} = (\Lambda_{c})^{-1/2}
\label{(27.4)}
\end{equation}
(see \cite{Pathria}, p. 278). Denoting the corresponding space metric by $\hat{g_{ab}}$, we have, by \eqref{(27.1)},
\begin{equation}
\hat{g_{ab}} = \frac{g_{ab}}{R(t)^{2}}
\label{(27.5)}
\end{equation}
where we made, for simplicity, a multiplicative renormalization $R(t) \to \Lambda_{c}^{-1/2} R(t)$.

The background radiation (CMB radiation, for ``cosmic microwave background'') fills the Universe today with a black-body spectrum with temperature
\begin{equation}
\hat{T}_{p} \approx 2,7 K
\label{(27.6)}
\end{equation}
(see \cite{Lud} and references given there), where the subscript $p$ stands for ``present''. By the ``hot big bang theory'' \cite{Lud}, it presumably arose from an \emph{equilibrium} state of the radiation field and a plasma of protons and electrons, at a time $t_{0}$ when the (equilibrium) temperature was $\hat{T}(t_{0}) = \hat{T}$, about $3000 K$, the ionization energy of the hydrogen atom (with $\hbar = c = 1$), at a certain $R(t_{0})$. 

A preliminary model of CMB radiation assumes that the metric $g_{ab}$ and the energy-momentum tensor of the photons are \emph{independent}, i.e., do not interact. For the photons, this has the rather drastic consequence that the averages of the energy and momentum at temperature $\hat{T}$, at the cosmic time $t_{0}$, may be related to the same quantities at a later time, by the forthcoming relations \eqref{(27.7)}-\eqref{(27.11)}. At the end of the present section we shall see that modifications of the first law for the Universe do require interaction of the metric with the energy-momentum of the photons (as well as with other matter in the Universe)- the so-called ``back-reaction''. It is, therefore, very surprising that this rather primitive model has had such enormous success: Planck's law fits the Universe's signal with astonishing precision (see, e.g., \cite{Straumann}). As we shall see, the CMB spectrum is, however, highly stable (metastable), and this fact may account for the unexpected accuracy of the approximation, which should probably be regarded within the framework of a ``cosmological perturbation theory'' \cite{Straumann}, shown to be quite successful in the treatment of the anisotropies of the CMB radiation, but no rigorous results exist in this direction. Other examples, not accessible to experiment, such as the thermalization of a quantum field in the presence of an event horizon (the Hawking thermal radiation phenomenon, see \cite{Sewell2} and references given there), show clearly that drastic changes of a quantum field certainly may occur even in a semiclassical picture. 

Even if back-reaction is accounted for, it must be said that unsolved problems remain concerning ``freezing'' the metric and searching for a particle interpretation at each instant of the cosmological time. In the (present) case of a non-stationary metric, the particle interpretation constantly changes with time. The Klein paradox (see, e.g., \cite{Sakurai}, section 3.7, pp 120-121) illustrates this statement well, in the sense that even an agreement on how many particles are being counted depends on each instant of time, since the metric acts like an external potential! 

For the value $R(t_{0})$, we consider a cube of (spatial) volume $\hat{V}=\hat{L}^{3}$, upon which we impose periodic boundary conditions (b.c.) on the radiation field. the wave vectors are $\hat{k}=\frac{2\pi}{\hat{L}}n;n \in \mathbf{Z}^{3}$. the number of photons per unit volume inside a tiny cube in $\hat{k}$-space, which we label by the vertices $(\hat{k}_{1},\hat{k}_{2},\hat{k}_{3}), (\hat{k}_{1}+\Delta \hat{k}_{1},\hat{k}_{2}+\Delta \hat{k}_{2},\hat{k}_{3}+\Delta \hat{k}_{3}$ of $\hat{k}-$ -volume $\Delta_{3}(\hat{k}) \equiv \Delta \hat{k}_{1}\Delta \hat{k}_{2}\Delta \hat{k}_{3}$ is equal to
\begin{equation}
\hat{n}_{t_{0}}(\hat{k}) = \frac{\Delta_{3}(\hat{k})}{\exp(\frac{|k|}{\hat{T}})-1}
\label{(27.7)}
\end{equation}
Above, $|k|= \sqrt{k_{1}^{2}+k_{2}^{2}+k_{3}^{2}}$ is the energy (frequency) of the photons. We shall write henceforth $k$ for $|k|$, for brevity.
Consider, now, any $t>t_{0}$. How does the evolution affect $\hat{n}_{t_{0}}(\hat{k})$?. This is obtained from \eqref{(27.5)}, with the hat denoting the evolution by a static Universe with $R(t)=R(t_{0}) = \mbox{ constant }$, and imposing the equality
\begin{equation}
\hat{n}_{t_{0}}(\hat{k}) \hat{V} = n_{t}(k) V
\label{(27.8)}
\end{equation}
with $\hat{V}=\hat{L}^{3}$, $V=L^{3}$, to obtain
\begin{equation}
r^{-3} \hat{n}_{t_{0}}(\hat{k}) = n_{t}(k)
\label{(27.9)}
\end{equation}
with
\begin{equation}
\frac{\hat{k}}{k} = r \equiv R(t_{0})^{-1}R(t)
\label{(27.10)}
\end{equation}
\begin{equation}
\hat{L} = r^{-1} L
\label{(27.11)}
\end{equation}
Equation \eqref{(27.10)} is the expression of the \emph{cosmological red shift}. By \eqref{(27.7)}, \eqref{(27.9)}, \eqref{(27.10)} and \eqref{(27.11)}, we obtain
\begin{equation}
n_{t}(k) = \frac{\Delta_{3}(k)}{\exp(\frac{(rk)}{\hat{T}})-1}= \frac{\Delta_{3}(k)}{\exp(\frac{k}{T})-1}
\label{(27.12)}
\end{equation}
Above,
\begin{equation}
T \equiv \frac{\hat{T}}{r}
\label{(27.13)}
\end{equation}
is the measured temperature of the present CMB radiation, which is, of course, \emph{not} an equilibrium temperature. As previously discussed, \eqref{(27.13)} (with the present high value of $r$, i.e., of the order of $3000$) shows why the CMB spectrum is highly stable (metastable), because only a tiny fraction of its ``tail'' interacts with the rest of the Universe, i.e., is able to ionize a hydrogen atom. The energy $\hat{U}$ in a volume $\hat{V}$  equals
$$
\hat{U} = (\sum_{\hat{k}= \frac{2\pi}{\hat{L}}n;n \in \mathbf{Z}^{3}} \hat{k} \hat{n_{t_{0}}}(\hat{k})) \hat{V}
$$
and, correspondingly, the energy $U$ in a volume $V$ equals
\begin{eqnarray*}
U = (\sum_{k= r^{-1}\frac{2\pi}{L}n;n \in \mathbf{Z}^{3}} k n_{t}(k)) V = \\
= (\sum_{k=\frac{2\pi}{L}n;n \in \mathbf{Z}^{3}} r^{-4} \hat{k} \hat{n_{t_{0}}}(\hat{k})) r^{3}\hat{V}  
\end{eqnarray*}
and hence
\begin{equation}
U = r^{-1}\hat{U}
\label{(28.1)}
\end{equation}
By the formula (\cite{LaL}; see also \cite{LYPR} for some startling properties of this formula within the Lieb-Yngvason framework):
\begin{equation}
S_{ph}(U,V) = U^{3/4}V^{1/4}
\label{(28.2)}
\end{equation}
we obtain

\begin{proposition}
\label{prop:5.1}
The photon entropy $S_{ph}(U,V)$ is, for any fixed $V$, a constant of the cosmological evolution. The specific entropy $s(u)$, defined by
\begin{equation}
s(u) \equiv \lim_{V \to \infty} \frac{S_{ph}(U,V)}{V}  \mbox{ with } u \equiv \lim_{V \to \infty} \frac{U}{V}
\label{(28.3)}
\end{equation}
depends, however, on $T$, given by \eqref{(27.13)}, and equals
\begin{equation}
s(u) = u^{3/4} = \frac{4}{3} \sigma T^{3}
\label{(28.4)}
\end{equation}
while
\begin{equation}
u = \sigma T^{4}
\label{(28.5)}
\end{equation}
with $\sigma = \frac{\pi^{2}}{15}$. Denoting by $\rho$ the average (thermal) photon density in the thermodynamic limit, we have
\begin{equation}
\rho = \frac{2 \zeta(3)}{\pi^{2}} T^{3}
\label{(28.6)}
\end{equation}
and, therefore, from \eqref{(28.4)},
\begin{equation}
s(u) = \frac{4}{45} \pi^{2} T^{3} \approx 3.6 \rho
\label{(28.7)}
\end{equation}
\end{proposition}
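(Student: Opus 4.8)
The plan is to separate the statement into an exact scaling identity (conservation of the full entropy) and a pair of equilibrium integrals (the Stefan--Boltzmann relations fixing $u$, $\rho$ and $s$). The first part uses only the homogeneity of \eqref{(28.2)} together with the redshift laws already derived; the second is two standard Planck integrals.

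First I would assemble the scaling laws governing the evolution. From \eqref{(27.11)} one has $L=r\hat L$, whence $V=L^{3}=r^{3}\hat V$, while \eqref{(28.1)} gives $U=r^{-1}\hat U$. Substituting into \eqref{(28.2)}, and noting that the powers of $r$ combine as $-\tfrac34+3\cdot\tfrac14=0$,
\[
S_{ph}(U,V)=(r^{-1}\hat U)^{3/4}(r^{3}\hat V)^{1/4}=\hat U^{3/4}\hat V^{1/4}=S_{ph}(\hat U,\hat V),
\]
which is independent of $r$ and hence of the cosmic time. This proves the first assertion. The cancellation of the $r$-powers is exactly what ``for any fixed $V$'' must be read to mean: one follows the physical (comoving) volume $V=r^{3}\hat V$ at each instant, and the entropy of that comoving region is conserved.

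Next I would compute the specific entropy. Since \eqref{(28.2)} is homogeneous of degree one, $S_{ph}(U,V)/V=(U/V)^{3/4}$ identically, so the limit in \eqref{(28.3)} exists trivially and gives $s(u)=u^{3/4}$. The decisive point is that this quantity, unlike $S_{ph}$, is \emph{not} conserved: through $u=\sigma T^{4}$ and $T=\hat T/r$ from \eqref{(27.13)} it tracks the falling temperature. This is precisely the announced failure of the thermodynamic limit to commute with the long-time limit, and I would present it as the conceptual content of the proposition rather than as a calculation. To fix the constants I would insert the equilibrium (Planck) form of \eqref{(27.12)} and evaluate
\[
u=\frac{1}{\pi^{2}}\int_{0}^{\infty}\frac{k^{3}\,dk}{e^{k/T}-1}=\frac{\pi^{2}}{15}T^{4}=\sigma T^{4},\qquad \rho=\frac{1}{\pi^{2}}\int_{0}^{\infty}\frac{k^{2}\,dk}{e^{k/T}-1}=\frac{2\zeta(3)}{\pi^{2}}T^{3},
\]
using $\int_{0}^{\infty}x^{3}(e^{x}-1)^{-1}dx=\pi^{4}/15$ and $\int_{0}^{\infty}x^{2}(e^{x}-1)^{-1}dx=2\zeta(3)$; these are \eqref{(28.5)} and \eqref{(28.6)}. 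The Euler relation for a massless gas at zero chemical potential, $s=\tfrac43\,u/T$ (from $PV=U/3$), then yields $s=\tfrac43\sigma T^{3}=\tfrac{4\pi^{2}}{45}T^{3}$, i.e.\ \eqref{(28.4)}, and the ratio $s/\rho=2\pi^{4}/(45\zeta(3))\approx 3.6$ gives \eqref{(28.7)}.

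The calculations are routine; the points demanding care are interpretive rather than technical. One is the sense of ``constant for fixed $V$'' just noted: truly freezing the physical volume while letting $r$ grow would \emph{not} conserve $S_{ph}$, so the comoving reading is essential. The other is a normalization caveat: the schematic \eqref{(28.2)} has had its multiplicative constant set to one, whereas \eqref{(28.5)} reinstates $\sigma=\pi^{2}/15$, and these cannot both hold literally. To keep \eqref{(28.4)} consistent one should read \eqref{(28.2)} as carrying the Landau--Lifshitz constant $\tfrac43\sigma^{1/4}$, so that $s(u)=\tfrac43\sigma^{1/4}u^{3/4}=\tfrac43\sigma T^{3}$, or else understand $u^{3/4}$ as shorthand in units in which that constant is unity. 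Reconciling this normalization is the only genuine obstacle, and it leaves untouched the scaling argument of the second paragraph, which holds for any multiplicative constant.
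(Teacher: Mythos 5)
Your proof is correct and follows essentially the same route as the paper, which simply notes that the first assertion ``follows from \eqref{(27.11)}, \eqref{(28.1)} and \eqref{(28.2)}'' (your cancellation $-\tfrac34+3\cdot\tfrac14=0$ under the comoving reading of $V$) and that the remaining assertions are the standard Planck-integral results of Landau--Lifschitz. Your additional observation that \eqref{(28.2)} as written omits the constant $\tfrac43\sigma^{1/4}$ needed for \eqref{(28.4)} to hold literally is a fair and correct normalization caveat, not a gap in the argument.
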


\begin{proof}

The first assertion follows from \eqref{(27.11)}, \eqref{(28.1)} and \eqref{(28.2)}. The remaining assertions are standard \cite{LaL}.

\end{proof}

\begin{corollary}
\label{cor:5.1}
With model \eqref{(27.2)},
\begin{equation}
\lim_{V \to \infty} \lim_{t \to \infty} \frac{S_{ph}(U,V)}{V} \ne \lim_{t \to \infty} \lim_{V \to \infty} \frac{S_{ph}(U,V)}{V} = 0
\label{(28.8)}
\end{equation}
Above, $U$ and $V$ denote energy and volume at a certain time $t$, in agreement with the previous notation, where we suppressed the parameter $t$ for simplicity.
\end{corollary}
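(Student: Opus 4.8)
The plan is to evaluate the two iterated limits separately, in each case reducing everything to Proposition~\ref{prop:5.1}, and then to read off the strict inequality. The two limits differ only in the order in which the long-time limit $t\to\infty$ and the thermodynamic limit $V\to\infty$ are performed, so the whole content is to show that these two operations fail to commute for the photon gas governed by the red-shift relations \eqref{(27.9)}--\eqref{(27.13)} with the scaling factor \eqref{(27.2)}.

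For the right-hand (thermodynamic-limit-first) member I would argue as follows. At a fixed cosmic time $t$ the radiation is characterized by the measured temperature $T=T(t)=\hat{T}/r(t)$ of \eqref{(27.13)}, with corresponding energy density $u(t)=\sigma T(t)^{4}$ by \eqref{(28.5)}. The inner limit is then exactly the specific entropy \eqref{(28.3)}, so by \eqref{(28.4)} one has $\lim_{V\to\infty}S_{ph}(U,V)/V=s(u(t))=u(t)^{3/4}\propto T(t)^{3}$. It remains only to let $t\to\infty$: since the model \eqref{(27.2)} gives $\lim_{t\to\infty}R(t)=\infty$ --- this is \eqref{(27.3)} --- the red-shift factor $r(t)=R(t_{0})^{-1}R(t)$ of \eqref{(27.10)} diverges, whence $T(t)=\hat{T}/r(t)\to0$ and $s(u(t))\to0$. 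This establishes the stated value $0$ for the right-hand side.

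For the left-hand (long-time-limit-first) member the decisive input is the \emph{first} assertion of Proposition~\ref{prop:5.1}: at fixed (comoving) volume the total photon entropy $S_{ph}(U,V)$ is a constant of the cosmological evolution. Consequently the inner limit $t\to\infty$ is trivial --- the entropy is frozen at its value at the preparation time $t_{0}$ --- so that $\lim_{t\to\infty}S_{ph}(U,V)/V$ equals $S_{ph}(U(t_{0}),V)/V$. Passing now to the thermodynamic limit and using \eqref{(28.2)}--\eqref{(28.4)} at the initial data, the outer limit returns the \emph{initial} specific entropy $s_{0}=u_{0}^{3/4}=\sigma^{3/4}\hat{T}^{3}$, with $u_{0}=\sigma\hat{T}^{4}$. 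Since the preparation temperature $\hat{T}$ (about $3000\,K$) is strictly positive, this limit is strictly positive, and the strict inequality in \eqref{(28.8)} follows by comparison with the vanishing right-hand side.

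The main obstacle, and the point that deserves the most care, is to make precise the sense in which the two inner limits probe genuinely different objects, since the symbol $V$ plays a slightly different role on the two sides. When $t\to\infty$ is taken first, one is following a \emph{finite} comoving system whose entropy is conserved by the first assertion of Proposition~\ref{prop:5.1} (the finite-volume, Liouville/Gibbs-type statement recalled just before the corollary, where $S(U_{t},V_{t})$ is continuous in $t$); the surviving specific entropy therefore remembers the initial temperature $\hat{T}$. When $V\to\infty$ is taken first, one instead evaluates the instantaneous equilibrium specific entropy of the infinite system at the current temperature $T(t)\to0$. Establishing that these are the operative meanings --- equivalently, that the conserved total entropy is divided by the comoving volume on the left while the physical energy density is what enters on the right --- is exactly the non-commutativity of the two limits announced in the discussion preceding the corollary, and is what underlies the gap $s_{0}>0$ recorded in \eqref{(28.8)}.
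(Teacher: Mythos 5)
Your proposal is correct and follows essentially the same route as the paper: the left-hand iterated limit is evaluated via the conservation of $S_{ph}(U,V)$ at fixed $V$ (the first assertion of Proposition~\ref{prop:5.1}), yielding the strictly positive initial specific entropy, while the right-hand one is the instantaneous specific entropy $\propto T(t)^{3}$, which vanishes as $t\to\infty$ by the red shift \eqref{(27.10)} and \eqref{(27.3)}. The only (immaterial) discrepancy is the constant in the nonzero value, $\sigma^{3/4}\hat{T}^{3}$ versus the paper's $\tfrac{4}{3}\sigma\hat{T}^{3}$, which merely reflects the paper's own dropping of constants in \eqref{(28.2)} relative to \eqref{(28.4)}.
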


\begin{proof}

By the conservation law of proposition  ~\ref{prop:5.1}, the l.h.s. of \eqref{(28.8)} equals $\frac{4}{3} \sigma \hat{T}^{3} \ne 0$, while, by \eqref{(27.13)} and \eqref{(28.4)}, together with \eqref{(27.2)} and \eqref{(27.10)}, the r.h.s. of \eqref{(28.8)} indeed equals zero. This result only depends, of course, on \eqref{(27.3)} and generalizes to any expanding Universe.

\end{proof}

Corollary ~\ref{cor:5.1} shows that the entropy of the CMB photons satisfies Second law (Cl) (trivially, being conserved), but not the Modified Second Law (Cl). this is due to the (general) noncommutativity of the thermodynamic limt with the limit of large times $t \to \infty$, previously mentioned at the end of section 4; it is here shown explicitly for this model. On the other hand, by \eqref{(28.1)}, First Law (Cl) is \emph{not} satisfied, due to the loss of energy of the CMB photons during the expansion of the Universe, as a consequence of the cosmological red-shift \eqref{(27.10)}.

According to \eqref{(27.12)}, we live immersed in a (canonical) state of radiation whose average (thermal) photon density $\rho = 20 T_{p}^{3}$, which amounts to four hundred photons per cubic centimeter - a colossal number, quoting Harrison (\cite{Har}, p.274): ``in one second, $10^{15}$ CMB photons will reach the surface of your hand, at least a factor of $10^{5}$ of all the photons which have been radiated by the stars, and, by \eqref{(28.7)} the largest part of the specific entropy of the Universe is already in the background radiation and will be hardly affected by the future behavior of the stars''. This specific entropy violates, however, Modified Second Law (Cl), and this violation is ultimately for the same reason why the First Law (Cl) is violated, namely, the cosmological red shift \eqref{(27.10)}. It seems therefore of great importance to understand the failure of both the first and (modified) second law of thermodynamics for the Universe. As a preliminary, one may pose the question: \emph{what} replaces the First Law (Cl) in the case of the Universe?

A gravitational energy-momentum pseudotensor has been proposed by Landau and Lifschitz in their text in classical field theory (\cite{LaLTC}, Chap. 100, pp. 379-381), and is called the Landau-Lifschitz pseudo-tensor $t_{LL}^{\mu\nu}$. When added to the energy-momentum of matter (which includes photons and neutrinos) $T^{\mu\nu}$ is such that its total divergence vanishes, i.e.,
\begin{equation}
((-g)(T^{\mu\nu}+t_{LL}^{\mu\nu})_{,\mu} = 0
\label{(29.1)}
\end{equation}
where $g$ denotes the determinant of $g^{\mu\nu}$.
Thirring (with Wallner) derives $t_{LL}^{\mu\nu}$ from the Landau-Lifschitz 3-form in his article in Rev. Bras. Fis. (\cite{ThirrRBF}, Appendix C). This yields a formula for $t_{LL}^{\mu\nu}$ in terms of the metric tensor $g^{\mu\nu}$ and its derivatives. This article was one of the first to explore the use of E. Cartan's exterior forms in Einstein's gravitation theory, following the pioneer work of unification of electromagnetism and gravitation by M. Schoenberg \cite{SchRBF}. Cartan's formalism, besides its elegance, is also very economical, allowing to write down a compact explicit formula for $t_{LL}^{\mu\nu}$ ((C.10) of Appendix C of \cite{ThirrRBF}), which neither I, nor the authors of \cite{ThirrRBF}), have seen written down elsewhere. From this formula, it is readily seen that $t_{LL}^{\mu\nu}$ contains only first derivatives of the metric tensor $g^{\mu\nu}$, and these may be made to vanish at any chosen point, upon choice of a frame which is locally inertial at this point. This follows from the mass-energy equivalence principle and, as a consequence, at any chosen point, $t_{LL}^{\mu\nu} = 0$, demonstrating why $t_{LL}^{\mu\nu}$ cannot be a tensor, as well as the important fact that the gravitational energy-momentum is delocalized. In spite of this, from \eqref{(29.1)} and Stokes' theorem, we obtain (\cite{Thirr3rd}, Cor. 7.3.35, n.1):

\emph{First Law (Cl)} The total energy and momentum are conserved, as long as energy and momentum fall off sufficiently fast at infinity on the submanifold $t= \mbox{ const. }$.

Under suitable conditions, it is also expected that the energy and momentum per unit volume in the thermodynamic limit exist and are conserved: this corresponds to versions of Modified First Law (Cl), which may be expected to be of greater relevance from a physical standpoint, since only the specific energy and momentum are accessible to experiment. It should be emphasized that, in spite of the loss of energy of (a part of) matter due to \eqref{(28.1)}, the pseudotensor $t_{LL}^{\mu\nu}$ does lead to a conservation law due to the \emph{cancellation} of the Einstein tensor with the matter stress-energy tensor $T^{\mu\nu}$ by the Einstein field equations, see (\cite{LaLTC}, p. 381). 

The above-mentioned cancellation does *not* occur, however, within the previously made approximation that the energy-momentum tensor of the photons is independent of the metric!. For this reason the validity of First Law (Cl) remains a challenging open problem.

Inclusion of the gravitational field is, in principle, also able to show that Modified second law (Cl) holds for the \emph{total} specific entropy. This is, however, an even more challenging open problem than the one associated to the first law: indeed, \emph{nothing} is rigorously known about the specific entropy of the gravitational field!

As a final remark, a term containing the cosmological constant
$$
-\frac{c^{4}}{8 \pi G} \Lambda g^{\mu\nu}
$$
should be added to $t_{LL}^{\mu\nu}$. Above, $\Lambda$ denotes the (Einstein) cosmological constant as before, and $G$ the gravitational constant. In this connection, it may be mentioned that \emph{scale invariance} of the macroscopic empty space, which intervenes through the cosmological constant, leads to a consistent theoretical framework, where neither dark energy or dark matter are needed, as shown in the beautiful work of Maeder (see \cite{Maeder} and references given there). These brief remarks reflect, however, my personal view only; for a deeper (and comprehensive) discussion of the issues of dark matter, dark energy and the cosmological constant, see \cite{Straumann}.   

\section{Conclusion}

Vaclav Havel has stated \cite{Havel} that ``the problem of modern man is not that he understands less and less the meaning of life, but that this fact has almost ceased to bother him''. I shall not insist on the obvious parallel of this quote to Sommerfeld's, but hope that the present manuscript may stimulate some readers in ``bothering further'' about the deep open problems associated to the fundamental laws of physics, in particular those of thermodynamics, some of which have been discussed in sections 3, 4 and 5.

\section{Acknowledgement} I should like to thank Elliott Lieb for his remarks and suggestions regarding this manuscript, Manfred Requardt for discussions and remarks concerning the Landau-Lifschitz pseudotensor, Pedro L. Ribeiro for his perceptive criticism of the CMB model and the observation concerning Klein's paradox, and Geoffrey Sewell for his observation on the time-arrow problem. I am also grateful to the referee for several constructive remarks, in particular for drawing my attention to the very relevant section 3.8 of Peierls' book.

\end{document}